\documentclass[conference]{IEEEtran}

\usepackage[utf8]{inputenc}
\usepackage{amsmath,amssymb,bm}
\usepackage[dvips]{graphicx}
\usepackage{array}
\usepackage{xparse}
\usepackage[section]{placeins}
\usepackage{cite}
\usepackage{empheq}
\usepackage{algpseudocode}
\usepackage{enumerate}
\usepackage{multicol}
\usepackage{lipsum}
\usepackage{color}
\usepackage{float}
\usepackage[capitalise]{cleveref}
\usepackage[mathcal]{euscript}
\usepackage{multirow,multicol,tabu}
\setlength{\columnsep}{1cm}
\usepackage{subcaption}
\usepackage{soul}

\newcolumntype{?}{!{\vrule width 1pt}}

\usepackage{mathtools}	
 
\usepackage{algorithm}
\usepackage{algpseudocode}

\usepackage{tikz,tabularx}
\usetikzlibrary{decorations.pathreplacing}
\tikzstyle{every picture}+=[remember picture]

\usetikzlibrary{shapes,arrows,shadows}	
\usepgflibrary{patterns}
\usetikzlibrary{patterns}
\usetikzlibrary{shapes.geometric}
\usetikzlibrary{arrows}

\usetikzlibrary{decorations.markings}

\tikzset{myptr1/.style={decoration={markings,mark=at position 1 with %
    {\arrow[scale=2.5]{>}}},postaction={decorate}}}
    
\tikzset{myptr2/.style={decoration={markings,mark=at position 1 with %
    {\arrow[scale=1.8]{>}}},postaction={decorate}}}
    
\usetikzlibrary{decorations.markings}

\usepackage{multicol}
\setlength{\columnsep}{1cm}


\usepackage{amsthm}

\newtheorem{remark}{Remark}

\newtheorem{theorem}{Theorem}
\newtheorem{lemma}{Lemma}
\newtheorem{proposition}{Proposition}

\def \DesignedTimeA {$55$}
\def \DesignedTimeB {$30$}
\def \DesignedTimeC {$55$}
\def \DesignedTimeD {$8$}
\def \DesignedTimeE {$10$}

\def \iid {iid}


\title{Approximate Gradient Coding for Heterogeneous Nodes}

\author{
  	\IEEEauthorblockN{Amogh Johri}
 	\IEEEauthorblockA{IIIT Bangalore}
 	\IEEEauthorblockA{Amogh.Johri@iiitb.org}		
 	\and
 	\IEEEauthorblockN{Arti Yardi}
 	\IEEEauthorblockA{IIIT Bangalore}
 	\IEEEauthorblockA{arti.yardi@iiitb.ac.in}		
 	\and
 	\IEEEauthorblockN{Tejas Bodas}
 	\IEEEauthorblockA{IIT Dharwad}
 	\IEEEauthorblockA{tejaspbodas@iitdh.ac.in}		
}

%
\begin{document}

\maketitle

\begin{abstract}
In distributed machine learning (DML), the training data is distributed across multiple worker nodes to perform the underlying training in parallel. One major problem affecting the performance of DML algorithms is presence of stragglers. These are nodes that are terribly slow in performing their task which results in under-utilization of the training data that is stored in them. Towards this, gradient coding mitigates the impact of stragglers by adding sufficient redundancy in the data. Gradient coding and other straggler mitigation schemes assume that the straggler behavior of the worker nodes is identical. Our experiments on the Amazon AWS cluster however suggest otherwise and we see that there is a correlation in the straggler behavior across iterations. To model this, we introduce a heterogeneous straggler model where nodes are categorized into two classes, slow and active. To better utilize training data stored with slow nodes, we modify the existing gradient coding schemes with shuffling of the training data among workers. Our results (both simulation and cloud experiments) suggest remarkable improvement with shuffling over existing schemes. We perform theoretical analysis for the proposed models justifying their utility.      
\end{abstract}
%

\begin{IEEEkeywords}
Distributed machine learning, Gradient coding, Straggler mitigation in synchronous distributed computing
\end{IEEEkeywords}

%
\section{Introduction}
\label{Section_Introduction}

%


In distributed computing, a computation task is distributed among a set of $n$ machines or worker nodes that perform a partial computation of the original task in parallel. The parallel computations are then combined by the master node to produce the desired result in an efficient way.
Distributed computing has the ability to increase the speed of 
parallelizable algorithms by a significant factor, by processing
each parallel division on a different worker node.
However, a major bottleneck in DML is that occasionally, some worker nodes are slower than the others, termed as \textit{stragglers}, and these stragglers affect the speed of overall computation to a great extent. 
Towards mitigating the impact of stragglers in distributed computing, coding theoretic techniques, have been found to be particularly useful~\cite{Kannan_first_DML_2018, Exact_GD_First_2017}. 

%
In this paper, we focus on distributed machine learning setup,
where the aim is to implement the iterative gradient descent algorithm. 
Coding techniques used in this setup are termed as \textit{gradient coding}~\cite{Exact_GD_First_2017, RS_Grad_2018, Gunduz_partial_2020, Gunduz_multimessage_2019,Gunduz_non_persistent_2018, 
Approx_GD_Expander_graph_Raviv_2017, Apprx_GD_sparse_graph_Papailiopoulos_2017_arxiv, Apprx_GD_Erasurehead_Papailiopoulos_2019_arxiv}. 
%
%
%
In gradient coding, the key idea is to create data partitions with coded redundancy such that they are robust to stragglers. 
In the literature, two type of models have been proposed to account for the straggler behavior, namely the \textit{worst-case} and \textit{average-case} straggler models. 
In the \textit{worst-case} straggler model, the maximum number of stragglers $s_{max}$ is assumed to be known and a coding scheme is devised such that the 
master is able to compute the exact gradient when the number of actual stragglers is less than or equal 
to $s_{max}$, termed as $s_{max}$-tolerant scheme~\cite{Exact_GD_First_2017, RS_Grad_2018}.
The \textit{average-case} model corresponds to the case where the number of actual stragglers can be more than $s_{max}$ and the master might not always be able to compute the exact value of the gradient. In this case, the master computes an approximate gradient using the computations performed by the set of 
non-stragglers~\cite{Apprx_GD_sparse_graph_Papailiopoulos_2017_arxiv, Approx_GD_funda_limits_Shroff_2019, Apprx_GD_Erasurehead_Papailiopoulos_2019_arxiv, Approx_GD_SGD_Bitar_2020, Apprx_GD_optimal_Decoding_Glasgow_2020_arxiv}.
In many applications, the average-case straggler model is desirable as $s_{max}$ might not be known \textit{a priori}.

\input{Figures/Figure_Straggler_behaviour_AWS_ISIT}

%
Irrespective of the model, a typical assumption in the literature is that each worker could behave as a straggler with a probability which is independent and identically distributed (\iid) across workers and across iterations of the algorithm. 
Since the effectiveness of the proposed method 
depends heavily on the validity of this assumption, as a first step, we performed extensive experiments on Amazon EC2 clusters using $t2.micro$ nodes to study the straggler behavior.
%
%
We ran a dummy program that should ideally take a deterministic time at each worker.
We observed that a worker node either behaves as a straggler or a non-straggler throughout the course of the experiment.
We see that while the straggler behavior is indeed \iid~across workers, it is heavily correlated across iterations of a worker (see \cref{Figure_AWS_straggler_example}). 
Motivated by our experiments, we propose a new straggler model, termed as \textit{heterogeneous straggler model} that aims to mimic stragglers in a better way. 
In this model, we classify workers into two classes, slow and active-class such that the slow-class workers are more likely to straggle. Further, each worker retains its class throughout the experiment.
See \cite{Timely_coded_Avestimehr_ISIT_2019, Timely_coded_Avestimehr_2019} for a recent attempt at modeling straggler nodes using a Markov chain. In this model, each worker can either be in a \textit{good} state or \textit{bad} state and the transitions between the two states are characterized using a Markov chain.
The schemes proposed in these works consist of estimating
the state transition probabilities with the aim of optimizing the data-sets distribution among the workers.
This model however does not accurately justify our observations that nodes are inherently either slow or active throughout the course of the experiment.

For our heterogeneous straggler model, it is intuitive to see that a data-set which is assigned to a slow-class worker is accessed less number of times compared to the data-set assigned to an active-class worker 
(see \cref{Proposition_FRC_CRC_with_without_shuffling}).
Due to this, we observe via simulations that, the coding schemes designed for iid stragglers perform poorly on heterogeneous model (see \cref{Figure_Loss_curves_ISIT}). A simple remedy for accessing data uniformly would be to shuffle the data randomly among the workers in each iteration of the algorithm. 
Note that this shuffling can be implemented without significant overhead in cloud computing platforms such as Amazon Web Services (AWS)~\cite{AWS_doc} and Microsoft Azure~\cite{WinNT}, where the training data is stored in a common memory that is accessible to all workers.
The main contributions of our work are as follows.
\begin{itemize}
\item We propose a new straggler model based on our experiments on the Amazon EC2 cluster to mimic the correlated straggler behavior across iterations.
\item To make the approximate gradient coding robust under this model, we introduced data shuffling that can be applied on top of
existing schemes such as fractional repetition codes (FRC) and cyclic repetition codes (CRC) (proposed in \cite{Exact_GD_First_2017}).
\item We provide a thorough analysis for these two schemes under shuffling
and obtain expressions for their expected optimal decoding error.
Our analysis for the FRC scheme extends the existing results~\cite[Thm.~6]{Apprx_GD_sparse_graph_Papailiopoulos_2017_arxiv} for the heterogeneous straggler model. We are also the first to analyze the expected optimal decoding error under the CRC scheme, invoking the properties of \textit{cycle class} and \textit{cycle representatives} \cite{Cyclic_decomposition_Tavares_1971}. 

%
\item Finally, we provide numerical results from our experiments on a simulated heterogeneous straggler environment and also from the Amazon EC2 clusters that indicate that shuffling does result in an efficient use of training data under this model.
\end{itemize}

%
\textit{Organization:} In \cref{Section_System_model}, we provide problem setup for our work.
Our proposed shuffling method and its analysis are discussed in Sections~\ref{Section_shuffling_scheme} and \ref{Section_one_iteration_error}.
Numerical results are given in \cref{Section_simulations} with concluding remarks in \cref{Section_Conclusion}.

%
\section{System model}
\label{Section_System_model}


We consider the distributed master-node computing setup consisting of $n$ worker nodes 
$W_1, W_2, \ldots, W_n$ and a master node.
%
Suppose we have been given a set of $N$ data points
$\mathcal{D} = \{(x_1,y_1), (x_2,y_2), \ldots, (x_N,y_N)) \}$ such that each $(x_i,y_i) \in \mathbb{R}^a \times \mathbb{R}$ and $a \in \mathbb{N}$.
This $\mathcal{D}$ can be divided into $n$ disjoint data-sets of equal size, denoted by $D_1, D_2, \ldots, D_n$.
For simplicity, we assume that $n$ divides $N$.
The master wishes to solve the following optimization problem
%
\begin{align}
\beta^* = \arg \min_{\beta \in \mathbb{R}^b} \sum_{i=1}^{N}l(\beta;x_i,y_i) + \lambda R(\beta),
\end{align}
where $b \in \mathbb{N},$ $l(.)$ represents the loss function, $R(.)$ represents the regularization function, $\lambda$ is a constant, and $\beta$ is a parameter that we wish to estimate~\cite{ML_book_Shwartz}.
This problem can be solved using iterative gradient based algorithms, where one begins with the initial guess $\beta^{(0)}$ for the parameter $\beta$ and the $\ell$-th iteration of the algorithm computes the gradient $g$ given by, $g = \sum_{i=1}^N \nabla l(\beta^{(\ell)};x_i,y_i)$ 
%
After every iteration, $\beta^{(\ell+1)}$ is obtained using $\beta^{(\ell+1)} = h_R(\beta^{(\ell)},g)$, where $h_R(.)$ corresponds to the {gradient based optimizer}. 
Suppose we perform a total of $L$ such iterations to estimate $\beta$.
In a distributed compute setting, each worker node $W_j$ computes the partial gradient $g_j$ corresponding to data-set $D_j$ 
given by $g_j = \sum_{(x_j,y_j) \in D_j} \nabla l(\beta^{(\ell)};x_j,y_j)$,
where $j = 1,2,\ldots, n$ and the complete gradient can be obtained by $g = \sum_{j=1}^n g_j$. 
%
We consider the architecture where the data-sets are stored in a common memory that is accessible to all the worker nodes\footnote{Our architecture is different from related literature~\cite{Exact_GD_First_2017,  Apprx_GD_sparse_graph_Papailiopoulos_2017_arxiv, Approx_GD_SGD_Bitar_2020}, where each worker node has access to a strict subset of data-sets.}. 
Note that this architecture is used in many real life distributed computing systems such as Amazon Web Services (AWS)~\cite{AWS_doc},  Microsoft Azure~\cite{WinNT}, and in general Network File Systems~\cite{Shared_memory_Radenski}.
%

%
%
%
%

\subsubsection*{Heterogeneous straggler model}
\label{Subsection_heterogenous_straggler_model}
Workers are divided into two classes, slow and active-class.
A worker node belongs to the slow-class with probability $\hat{p}$  independently of the remaining nodes. 
A worker remains in its assigned class throughout the experiment of $L$ iterations.
We additionally assume that in any iteration, a node in the slow-class straggles with probability $p_{ss}$ and this behavior of the node is independent of any of its previous iteration. Similarly, a node in the active-class straggles with probability $p_{as}$, such that $p_{ss} > p_{as}$. 
Note that in practice, this corresponds to the setup when in each iteration, the master waits for a fixed amount of time for workers to send their computations, and $p_{ss}$ and $p_{as}$ represents the probability with which the slow and active nodes are unable to perform their computation in the stipulated time. 

%
%
%


\subsubsection*{Some preliminaries about gradient coding}
\label{Subsection_gradient_coding_preliminaries}

In gradient coding, each worker computes partial gradients corresponding to more than one data-set~\cite{Exact_GD_First_2017}.
An encoding matrix $B \in R^{n \times n}$ represents which worker node gets assigned to which data-sets. 
The rows of $B$ correspond to $n$ data-sets and the columns correspond to $n$ workers such that $B(i,j) \neq 0$ if $W_j$ computes the partial gradient $g_i$ corresponding to $D_i$.
In each iteration, $W_j$ sends the linear combination 
$\sum_{i=1}^n B(i,j) g_{i}$ of the partial gradients that it has computed to the master.
Suppose coding scheme $B$ is designed to tolerate $(s-1)$  stragglers. 
%
%
Suppose in an iteration, the master node received gradients from $r$ non-stragglers where $r\in \{0, 1, \ldots, n\}$.
Let $A \in \mathbb{R}^{n \times r}$ be the submatrix is formed by considering the columns of $B$ that correspond to non-stragglers. 
This matrix $A$ is termed as a \textit{non-straggler matrix}~\cite{Apprx_GD_sparse_graph_Papailiopoulos_2017_arxiv}.
For an $(s-1)$-tolerant coding scheme, master is guaranteed to compute the exact gradient $g$ when $r \geq n - s + 1$~\cite{Exact_GD_First_2017}.
When $r < n-s+1$, master may or may not be able to compute $g$ exactly and an error is said to have occurred. 
This error is termed as the \textit{optimal decoding error} $\mbox{err}(A)$ associated to $A$ and it is given by~\cite{Apprx_GD_sparse_graph_Papailiopoulos_2017_arxiv}
\begin{align}
%
\mbox{err}(A) \coloneqq \min_{\mathbf{x} \in \mathbb{R}^{r}} || A \mathbf{x} - \mathbf{1}_n ||_2^2, 
\label{Eqn_Error_submatrix_A}
\end{align}
where $\mathbf{1}_n$ denotes the all-ones vector of length $n$.


\section{Proposed shuffling scheme}
\label{Section_shuffling_scheme}
%
Recall that in our heterogeneous straggler model, workers can either belong to a slow-class or an active-class and  
workers retain their classes throughout the experiment of $L$ iterations.
In any $\ell$-th iteration we say that a data-set $D_i$, $1 \leq i \leq n$, is \textit{accessed} if there exists a non-straggler worker $W_j$ to which $D_i$ is assigned, i.e., $A(i,j) \neq 0$ for some $j$, $1 \leq j \leq r$, where recall that $r$ is the number of non-stragglers and  $A$ is the corresponding non-straggler matrix.
For our heterogeneous straggler model, one can intuitively see that, on an average, a data-set which is assigned to a slow-class worker is accessed less number of times compared to the data-set which is assigned to an active-class worker (this is formally proved in \cref{Proposition_FRC_CRC_with_without_shuffling}). 
When data-sets are accessed non-uniformly, this leads to poor training of the parameter $\beta$, compared to the situation when all data-sets are accessed uniformly (see \cref{Figure_Loss_curves_ISIT} for numerical evidence).

Therefore it is essential to design a new coding scheme to determine the data partitioning between slow and active nodes.
One possible scheme could be designed based on \cite{Timely_coded_Avestimehr_ISIT_2019}, where the parameters $\hat{p}, p_{as}, p_{ss}$ are estimated and then data-sets are distributed based on whether nodes are slow or active. However such estimation techniques offer an overhead to the system and may not lead to desired results if these straggling probabilities change over the time. Further, for the experiments of shorter iterations, these estimates might be far from accurate leading to suboptimal behavior. To alleviate these difficulties, we seek a simple yet effective strategy that would prevent uneven data-set access.

Note that our system architecture is such that all the data-sets are stored in a shared memory that is accessible to all workers. Instead of designing a new coding scheme, we leverage the common memory architecture to our advantage to ensure that the data sets are accessed uniformly. We observe that by simply shuffling the columns of the encoding matrix $B$ randomly in each iteration, one can access the data-sets as uniformly under the heterogeneous straggler model.
We prove this for FRC and CRC schemes~\cite{Exact_GD_First_2017} in the following elementary proposition and discuss our main results on the optimal decoding error in the next section. 
Note that for both $(s-1)$-tolerant FRC and CRC schemes, any data-set is assigned to exactly $s$ workers~\cite{Exact_GD_First_2017}.
%

\begin{proposition}
\label{Proposition_FRC_CRC_with_without_shuffling}
For $(s-1)$-tolerant FRC (or CRC) scheme, suppose  
%
in the first iteration, data-set $D_i$, $1 \leq i \leq n$, is assigned to workers $W_{j_1}, W_{j_2}, \ldots, W_{j_s}$, where $1 \leq j_1, j_2, \ldots, j_s \leq n$.
%
Suppose that exactly $k_i$ workers out of these $s$ belong to the slow-class, where $0 \leq k_i \leq s$.
Then in an experiment of $L$ iterations, the expected number of times $D_i$ is not accessed with and without shuffling is equal to $L \big(\hat{p}p_{ss} + (1-\hat{p})p_{as} \big)^s$ and $L p_{ss}^{k_i}p_{as}^{s-{k_i}}$ respectively.
\end{proposition}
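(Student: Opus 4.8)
The plan is to reduce everything to a single per-iteration probability and then sum over the $L$ iterations by linearity of expectation. For a fixed iteration, observe that $D_i$ fails to be accessed precisely when \emph{every} worker to which $D_i$ is currently assigned is a straggler in that iteration; since any data-set in an $(s-1)$-tolerant FRC or CRC scheme is assigned to exactly $s$ workers, this is the event that all $s$ assigned workers straggle. I would keep in mind the three independent sources of randomness: the once-and-for-all class assignment, in which each worker is slow with probability $\hat p$ independently; the per-iteration straggling, which is independent across workers and iterations given the classes; and, in the shuffling variant, the per-iteration random column permutation. Writing $X_\ell$ for the indicator that $D_i$ is not accessed in iteration $\ell$, the quantity of interest is $\mathbb{E}[\sum_{\ell=1}^{L} X_\ell] = \sum_{\ell=1}^{L} \Pr(X_\ell = 1)$, so it suffices to compute $\Pr(X_\ell = 1)$ in each regime and show it does not depend on $\ell$.

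Without shuffling, the $s$ workers assigned to $D_i$ are fixed for all $L$ iterations and, by hypothesis, exactly $k_i$ of them are slow and $s - k_i$ active, with these classes fixed throughout. Conditioning on this class assignment, the straggling events of the distinct assigned workers are independent, so the probability that all $s$ straggle factorizes into $p_{ss}^{k_i} p_{as}^{s - k_i}$. This value is the same in every iteration, giving $L\, p_{ss}^{k_i} p_{as}^{s - k_i}$ after summing.

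With shuffling, in iteration $\ell$ the set of workers carrying $D_i$ is $\{\pi_\ell(j_1), \dots, \pi_\ell(j_s)\}$ for a uniformly random permutation $\pi_\ell$, hence a uniformly random set of $s$ \emph{distinct} workers. The crucial computation is that for \emph{any} fixed set $T$ of $s$ distinct workers, $\Pr(\text{all of } T \text{ straggle}) = q^s$, where $q \coloneqq \hat p\, p_{ss} + (1 - \hat p) p_{as}$ is the marginal probability that a single worker straggles. I would obtain this by conditioning on the classes of the workers in $T$: given the classes, the straggling events are independent, so the conditional probability is a product of per-worker terms; since the class assignment is i.i.d.\ across the distinct workers in $T$, taking expectations splits the product into $\prod_{w \in T}\bigl(\hat p\, p_{ss} + (1 - \hat p) p_{as}\bigr) = q^s$. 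Because this value is independent of $T$, averaging over the random permutation leaves it unchanged, so $\Pr(X_\ell = 1) = q^s$ for every $\ell$, and summing over $L$ iterations yields $L\,\bigl(\hat p\, p_{ss} + (1 - \hat p) p_{as}\bigr)^s$.

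The routine parts are the characterization of ``not accessed'' and the two factorizations. The one step deserving care, and the only real content beyond bookkeeping, is the shuffling computation: I must justify that the per-iteration probability is genuinely $q^s$ rather than a quantity depending on the random class composition of the assigned set. This rests on two facts that I would state explicitly, namely that the random shuffle always sends $D_i$ to $s$ \emph{distinct} workers (so the conditional-independence-of-straggling hypothesis applies) and that the class assignment is independent across workers (which is exactly what lets the expectation of the product split into a product of identical marginals $q$). Everything else then follows from linearity of expectation.
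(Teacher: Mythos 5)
Your proposal is correct and follows essentially the same route as the paper: compute the per-iteration probability of non-access in each regime ($p_{ss}^{k_i}p_{as}^{s-k_i}$ without shuffling, and $q^s$ with $q=\hat p\,p_{ss}+(1-\hat p)p_{as}$ under shuffling) and multiply by $L$. If anything, your use of linearity of expectation over indicators is slightly cleaner than the paper's appeal to a binomial count, since under shuffling the non-access events are correlated across iterations through the shared class assignment, and linearity sidesteps that issue entirely.
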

\begin{proof}
Let $A$ be the event that $D_i$ is not accessed in any $\ell$-th iteration.
When the columns of $B$ are not shuffled, workers assigned to $D_i$ do not change over the course of $L$ iterations
and since exactly $k_i$ workers belong to the slow-class, $\mathbb{P}[A|\mbox{no-shuffling}] = p_{ss}^{k_i}p_{as}^{s-k_i}$.
When the columns of $B$ are shuffled uniformly at random in each iteration, essentially the set of workers assigned to $D_i$ are shuffled uniformly.
Thus in any $\ell$-th iteration, any worker that is assigned to $D_i$ will belong to the slow and active-class with probabilities $\hat{p}$ and $1-\hat{p}$ respectively.
For event $A$ to occur, all $s$ workers that are assigned to $D_i$ should straggle and hence $\mathbb{P}[A|\mbox{shuffling}] = (\hat{p}p_{ss} + (1-\hat{p})p_{as} \big)^s$.
In heterogeneous straggler model, any worker straggles independently in each iteration (with probabilities decided according their classes). Hence event $A$ 
%
follows the binomial distribution 
%
and this gives us the required expectations.
%
\end{proof}

It can be seen that, while for the non-shuffling scheme the expected number of times $D_i$ is not accessed is a function of number of slow-class workers ($k_i$) assigned to $D_i$, for the shuffling scheme it is not. This causes non-uniform data access for the non-shuffling scheme.

\begin{remark}
Random shuffling has a disadvantage that in each iteration, the master needs to provide each worker node with the address of the memory locations of the data-sets that a worker is supposed to work on in that iteration. 
However, this disadvantage can be mitigated by using a pre-decided shuffling pattern, such as cyclic shuffling.
%
%
When $L \gg n$, pre-decided shuffling patterns also give desirable results.
\hfill $\square$
\end{remark}

\section{Main results}
\label{Section_one_iteration_error}
%
%
In this section, we find the expected decoding error in any $\ell$-th iteration with our shuffling scheme for FRC and CRC.
We assume that $m$ workers out of $n$ belong to slow-class and refer our model as $(n,m)$-heterogeneous straggler model.
%

\subsection{Expected decoding error for FRC with shuffling scheme}
\label{subsection_FRC_decoding_error}

The expected error for the FRC scheme in presence of iid stragglers
is studied in \cite[Theorem~6]{Apprx_GD_sparse_graph_Papailiopoulos_2017_arxiv}. We now extend this analysis for our $(n,m)$-heterogeneous straggler model.
%

\begin{theorem}
\label{Theorem_FRC_EX_error_one_iteration}
For $(n,m)$-heterogeneous straggler model, the expected error $\mathbb{E}_{err}(FRC, n, s)$ for an $(s-1)$-tolerant, $n$-node FRC scheme with random shuffling  is given by
\begin{align*}
\mathbb{E}_{err}(FRC, n, m, s) 
= n \sum_{r = 0}^{n-s} P_r  {n-s \choose r}\Big/{n \choose r} , 
\end{align*}
where $P_r$ is the probability that there are $r$ non-stragglers in any $\ell$-th iteration and is given by 
\begin{align*}
P_r = \sum_{i=0}^r {m \choose r^{\prime}(i)} {n-m \choose r-r^{\prime}(i)} \bar{p}_{ss}^i p_{ss}^{(m-i)} \bar{p}_{as}^{(r-i)} p_{as}^{(n-m-r+i)}
\end{align*}
where $r^{\prime}(i) = \min(i,m), \bar{p}_{ss} = 1 - p_{ss}$ and $\bar{p}_{as} = 1 - p_{as}$.
%
\end{theorem}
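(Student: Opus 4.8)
The plan is to reduce the least-squares quantity $\mbox{err}(A)$ to a purely combinatorial count of which FRC groups are fully straggled, and then take the expectation by conditioning on the number $r$ of non-stragglers. First I recall the structure of the $(s-1)$-tolerant FRC scheme: the $n$ workers are partitioned into $n/s$ disjoint groups of size $s$, and all $s$ workers in a group are assigned the same block of $s$ data-sets, so their columns of $B$ share the same support of $s$ rows. Hence the columns of the non-straggler matrix $A$ that come from distinct groups have disjoint supports, and the minimization in \eqref{Eqn_Error_submatrix_A} separates group by group.

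Next I would characterize $\mbox{err}(A)$ exactly. For a group in which at least one worker is a non-straggler, a surviving column equals (up to scaling) the indicator of that group's $s$ rows, so the corresponding coordinate of $\mathbf{x}$ can be chosen to match $\mathbf{1}_n$ exactly on those rows, contributing zero. For a group all of whose $s$ workers straggle, no column of $A$ is supported on those rows, so the least-squares solution is forced to $0$ there, contributing $\|\mathbf{1}_s\|_2^2 = s$. Thus $\mbox{err}(A) = s\cdot(\text{number of fully straggled groups})$, and by linearity of expectation $\mathbb{E}_{err}(FRC,n,m,s) = s\cdot(n/s)\cdot\mathbb{P}[\text{a fixed group is fully straggled}] = n\cdot\mathbb{P}[\text{a fixed group is fully straggled}]$, which already accounts for the leading factor $n$.

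I would then evaluate this probability by conditioning on $r$. The number of non-stragglers is the sum of independent slow-class and active-class survivals, distributed as $\mathrm{Bin}(m,\bar p_{ss})+\mathrm{Bin}(n-m,\bar p_{as})$; writing $i$ for the number of surviving slow workers (bounded by $m$, which is what $r'(i)=\min(i,m)$ records) and convolving the two binomials yields exactly the stated $P_r$. The crucial point, and the place where shuffling enters, is that because the columns of $B$ are permuted uniformly at random and independently of the straggling, the $s$ positions occupied by any fixed group form a uniformly random $s$-subset of the $n$ workers. Conditioned on $r$ non-stragglers, i.e.\ $n-r$ stragglers, the probability that all $s$ of those positions fall among the stragglers is the hypergeometric ratio $\binom{n-r}{s}\big/\binom{n}{s}$, which notably does not depend on the slow/active identities of the stragglers. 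This class-independence is precisely the uniform-access property that shuffling buys, in contrast to \cref{Proposition_FRC_CRC_with_without_shuffling}.

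Finally I would assemble $\mathbb{P}[\text{fixed group fully straggled}] = \sum_r P_r \binom{n-r}{s}\big/\binom{n}{s}$ and apply the elementary identity $\binom{n-r}{s}\big/\binom{n}{s} = \binom{n-s}{r}\big/\binom{n}{r}$ to rewrite the summand as in the statement; it vanishes once $r>n-s$, which truncates the sum at $n-s$. The main obstacle I anticipate is the error-characterization step: one must argue cleanly that the disjoint-support block structure makes the least-squares problem separable, and that a fully straggled group contributes exactly $s$ rather than some smaller residual, ruling out cross-group cancellation and pinning the per-group cost at $\|\mathbf{1}_s\|_2^2$. The binomial bookkeeping for $P_r$ and the binomial-coefficient identity are routine by comparison.
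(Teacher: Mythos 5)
Your proposal is correct, and it follows the same overall skeleton as the paper's proof: compute $P_r$ by convolving the two binomials for slow- and active-class survivors, argue that random shuffling makes the conditional distribution of the non-straggler matrix uniform over all $\binom{n}{r}$ column subsets independently of the slow/active identities, and then condition on $r$. Where you genuinely diverge is in the middle step. The paper obtains the conditional expected error $n\binom{n-s}{r}\big/\binom{n}{r}$ by citing \cite[Thm.~6]{Apprx_GD_sparse_graph_Papailiopoulos_2017_arxiv} --- and in fact has to correct a typo in that cited expression --- whereas you derive it from scratch: the block-diagonal structure of the FRC encoding matrix makes the least-squares problem in \cref{Eqn_Error_submatrix_A} separate across the $n/s$ disjoint-support groups, giving $\mbox{err}(A) = s \cdot (\text{number of fully straggled groups})$ exactly, after which linearity of expectation and the hypergeometric probability $\binom{n-r}{s}\big/\binom{n}{s} = \binom{n-s}{r}\big/\binom{n}{r}$ finish the job. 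This self-contained derivation is a real asset here: it independently confirms that the corrected form $n\binom{n-s}{r}\big/\binom{n}{r}$ (rather than the published $n\binom{n-s}{s-r}\big/\binom{n}{r}$) is the right one, and it makes explicit the point the paper leaves implicit, namely that shuffling is what decouples a group's survival from the class composition of the stragglers so that only the count $r$ matters. The one step you flag as a potential obstacle --- ruling out cross-group cancellation --- is in fact immediate from the disjoint supports, so there is no gap.
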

%
%
\begin{proof}
We first find the probability $P_r$ that there are $r$ non-stragglers in any $\ell$-th iteration.
Each non-straggler can either belong to the slow-class or active-class. 
Suppose $i$ non-stragglers out of total $r$ non-stragglers belong to the slow-class, then the remaining $r-i$ should belong to the active-class. 
For our $(n,m)$-heterogeneous straggler model, probability of this is given by $\bar{p}_{ss}^i p_{ss}^{(m-i)} \bar{p}_{as}^{(r-i)} p_{as}^{(n-m-r+i)}$.
Conditioning over $i$ we obtain the expression of $P_r$.

In any $\ell$-th iteration, due to random shuffling, the probability that a column of $B$ is associated with either slow (or active-worker) remains the same for each column and hence all possible ${n \choose r}$ submatrices are equally likely. From \cite[Thm.~6]{Apprx_GD_sparse_graph_Papailiopoulos_2017_arxiv}, 
the expected error under the condition of $r$ non-stragglers is equal to $n {n-s \choose r}/{n \choose r}$(Kindly note that we have corrected Thm.~6 of \cite{Apprx_GD_sparse_graph_Papailiopoulos_2017_arxiv}, where this expression is given to be $n{n-s \choose s-r}/{n \choose r}$). 
The expression of $\mathbb{E}_{err}(FRC, n, m, s)$ now follows by conditioning over $r$ non-stragglers and noting that when $r \geq n-s,$ the decoding error is zero. 
\end{proof}

\subsection{Expected decoding error for CRC with shuffling scheme}
\label{subsection_CRC_decoding_error}

We first introduce some notation that we shall need in this subsection.
Suppose $\mathbb{F}_2^n$ denotes the set of all possible binary vectors of lengths $n$. For a vector $\mathbf{v} \in \mathbb{F}_2^n$, let $\mathcal{S}_{\mathbf{v}}$ denotes the set of distinct vectors that are obtained by considering all possible cyclic shifts of $\mathbf{v}$. 
In the literature, the set $\mathcal{S}_{\mathbf{v}}$ is referred to as the \textit{cycle class} of $\mathbf{v}$ and the weight of class $\mathcal{S}_{\mathbf{v}}$ is defined as the weight of the corresponding $\mathbf{v}$~\cite{Cyclic_decomposition_Tavares_1971}.
Since the elements of a cycle class are cyclic shifts of one another, a cycle class can be represented by any one of its element termed as \textit{cycle representative}. 
The cardinality of the set $\mathcal{S}_{\mathbf{v}}$ is called as its \textit{order}, denoted by $e$ and it is known that $e$ should divide $n$~\cite{Cyclic_decomposition_Tavares_1971}. 
The set $\mathbb{F}_2^n$ can be partitioned into distinct cycle classes and the number of such distinct cycle classes of $\mathbb{F}_2^n$ are characterized in the following lemma. 
\begin{lemma}
\label{Lemma_Tavares}
\cite[pp.11]{Cyclic_decomposition_Tavares_1971}
The number of distinct cyclic classes of $\mathbb{F}_2^n$ of weight $r$ and order $e$, denoted by $N(r,e)$, is equal to the coefficient of $x^r$ in the polynomial $R_e(x)$ given by
\begin{align*}
R_e(x) = \frac{1}{e} \sum_{d|e} \mu\left(\frac{e}{d}\right) (1+x^{n/d})^d
\end{align*}
where $\mu(.)$ is the Moebius function defined on set of positive integers given by~\cite[Ch.~3]{Lidl86}, 
$\mu(n)= 1$ if $n=1$, $\mu(n)= 0$ if $n=1$ if $n$ is divisible by the square of a prime, and $\mu(n)= (-1)^k$ if $n$ is the product of $k$ distinct primes.
%
%
\hfill $\square$
\end{lemma}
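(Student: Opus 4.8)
The plan is to view $\mathbb{F}_2^n$ as a set acted on by the cyclic group $C_n = \langle \sigma \rangle$, where $\sigma$ denotes a single cyclic shift, so that the cycle classes $\mathcal{S}_{\mathbf{v}}$ are precisely the orbits of this action and the order $e = |\mathcal{S}_{\mathbf{v}}|$ is the orbit size. The first step is to reinterpret the order of a class in terms of periodicity: by the orbit--stabilizer theorem the stabilizer of $\mathbf{v}$ is a subgroup $\langle \sigma^{p} \rangle$ of $C_n$, where $p$ is the minimal period of $\mathbf{v}$ (the least positive integer with $\sigma^p \mathbf{v} = \mathbf{v}$, which necessarily divides $n$), so $|\mathcal{S}_{\mathbf{v}}| = n/(n/p) = p$. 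Hence the classes of order exactly $e$ correspond, $e$-to-one on vectors, to the vectors whose minimal period is exactly $e$, and $N(r,e)$ counts such classes of weight $r$.

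Second, I would count, for each divisor $e \mid n$, the number $G(e)$ of weight-$r$ vectors fixed by $\sigma^{e}$, i.e.\ those whose minimal period divides $e$. A vector with $\sigma^{e}\mathbf{v} = \mathbf{v}$ is completely determined by a block of $e$ consecutive coordinates repeated $n/e$ times, and turning on one block coordinate contributes $n/e$ ones to $\mathbf{v}$. Recording the weight contribution of each of the $e$ block positions by the factor $(1 + x^{n/e})$, the weight enumerator of the $\sigma^e$-fixed vectors is exactly $(1+x^{n/e})^{e}$, so $G(e) = [x^r](1+x^{n/e})^e$. As a sanity check, $e=n$ gives $(1+x)^n$ and $G(n) = \binom{n}{r}$, i.e.\ all weight-$r$ vectors.

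Third, I would relate $G$ to $N$ by partitioning the $\sigma^e$-fixed vectors according to their exact minimal period $d$, which must divide $e$. Each class of order $d$ contains exactly $d$ vectors of a common weight (cyclic shifts preserve weight), so the number of weight-$r$ vectors of minimal period exactly $d$ is $d\,N(r,d)$, yielding the divisor identity $G(e) = \sum_{d \mid e} d\,N(r,d)$.

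Finally, Möbius inversion over the divisor lattice gives $e\,N(r,e) = \sum_{d\mid e}\mu(e/d)\,G(d)$; substituting $G(d) = [x^r](1+x^{n/d})^d$ and dividing by $e$ identifies $N(r,e)$ as the coefficient of $x^r$ in $R_e(x) = \tfrac1e\sum_{d\mid e}\mu(e/d)(1+x^{n/d})^d$, as claimed. The only delicate point is the bookkeeping in the middle two steps: correctly reading off the block-repetition weight enumerator $(1+x^{n/e})^e$, and making sure that the event ``fixed by $\sigma^e$'' (period dividing $e$) is what must be Möbius-inverted against ``minimal period exactly $e$''. Once the identity $G(e)=\sum_{d\mid e} d\,N(r,d)$ is set up cleanly, the inversion and the final identification are routine.
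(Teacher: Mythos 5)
Your proof is correct. Note, however, that the paper itself offers no proof of this lemma: it is imported verbatim by citation from Tavares's work on the cyclic decomposition of $\mathbb{F}_2^n$, so there is no in-paper argument to compare against. Your derivation is the standard necklace-counting one and it is sound at every step: the orbit of $\mathbf{v}$ under the shift $\sigma$ has size equal to the minimal period $p$ of $\mathbf{v}$ (by orbit--stabilizer, since the stabilizer is $\langle\sigma^p\rangle$ of index $p$); the weight enumerator of the $\sigma^e$-fixed vectors is $(1+x^{n/e})^e$ because such a vector is a block of length $e$ repeated $n/e$ times, each set block-coordinate contributing $n/e$ to the weight; the divisor identity $G(e)=\sum_{d\mid e} d\,N(r,d)$ follows since each class of order $d$ contributes $d$ vectors of the same weight; and Möbius inversion then yields $e\,N(r,e)=\sum_{d\mid e}\mu(e/d)\,G(d)$, which is exactly the coefficient-of-$x^r$ statement for $R_e(x)$. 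This is almost certainly the argument underlying the cited source, so your proposal supplies a correct, self-contained justification for a result the paper only quotes. (Incidentally, the paper's transcription of the Möbius function contains a typo --- ``$\mu(n)=0$ if $n=1$ if $n$ is divisible by the square of a prime'' --- which your proof implicitly uses in its corrected form.)
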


For the given $r \in \{0, 1, \ldots, n\}$, we now focus on the vectors of weight $r$. Let $N_r$ be the total number of distinct cycle classes of weight $r$. From \cref{Lemma_Tavares}, we have $N_r = \sum_{e | n} N(r,e)$. 
Let $\mathcal{V}_r \coloneqq \{\mathbf{v}_1,\mathbf{v}_2, \ldots, \mathbf{v}_{N_r} \}$ be the cycle representatives of these $N_r$ cycle classes and $\mathcal{E}_r \coloneqq \{e_1, e_2, \ldots, e_{N_r} \}$ be their respective orders\footnote{Various algorithms for finding the cycle representatives of binary cyclic codes are studied in the literature. For example, one can refer to the algorithms provided in \cite[Page 6]{Cyclic_decomposition_Tavares_1973} and \cite[Section 8.4]{Macwilliams_Sloane_1977}. Since $\mathbb{F}_2^n$ is trivial cyclic code, one can use these algorithms to find its cycle representatives.}.

We now relate these cycle classes to non-straggler matrices of $(s-1)$-tolerant, $n$-node CRC scheme with encoding matrix $B_{CRC}$.
Corresponding to each $\mathbf{v}_i \in \mathcal{V}_r$, let $A_{\mathbf{v}_i}$ be the submatrix of $B_{CRC}$ such that $A_{\mathbf{v}_i}$ consists of the $j$th column of $B_{CRC}$ if the $j$th component of $\mathbf{v}_i$ is equal to one, for $j = 1, 2, \ldots, n$.
For every $\mathbf{v}_i^{\prime} \in \mathcal{S}_{\mathbf{v}_i}$, consider the corresponding submatrix $A_{\mathbf{v}_i^{\prime}}$ and obtain the set $\mathcal{A}_{\mathcal{S}_{\mathbf{v}_i}}$ of matrices corresponding to all possible $\mathbf{v}_i^{\prime} \in \mathcal{S}_{\mathbf{v}_i}$.
With slight abuse of notation, we refer to this set $\mathcal{A}_{\mathcal{S}_{\mathbf{v}_i}}$ as the cycle class of matrix $A_{\mathbf{v}_i}$.
Note that $A_{\mathbf{v}_i}$ can be chosen to be the representative of the set $\mathcal{A}_{\mathcal{S}_{\mathbf{v}_i}}$.
Let $\mathcal{A}_r \coloneqq \{A_{\mathbf{v}_1}, A_{\mathbf{v}_2}, \ldots, A_{\mathbf{v}_{N_r}} \}$ be the set of all these representative matrices.
Note that all possible ${n \choose r}$ submatrices of $B_{CRC}$ corresponding to $r$ non-stragglers can be partitioned into distinct cycle classes $\{ \mathcal{A}_{\mathcal{S}_{\mathbf{v}_1}}, \mathcal{A}_{\mathcal{S}_{\mathbf{v}_2}}, \ldots, \mathcal{A}_{\mathcal{S}_{\mathbf{v}_{N_r}}} \}$.
%
%
Using these notation, we next obtain the expected decoding error of the CRC scheme.

\input{Figures/Figure_Loss_curves_ISIT}

\begin{theorem}
\label{Theorem_CRC_EX_error_one_iteration}
For $r = 0, 1, \ldots, n$, consider the sets $\mathcal{V}_r, \mathcal{E}_r,$ and $\mathcal{A}_r$ of cardinality $N_r$ as defined above. 
%
%
%
Then for $(n,m)$-heterogeneous straggler model, the expected error $\mathbb{E}_{err}(CRC, n, m, s)$ of the $(s-1)$-tolerant, $n$-node CRC scheme with our random shuffling scheme is given by
\begin{align*}
\mathbb{E}_{err}(CRC, n, m, s) 
= \sum_{r = 0}^{n-s} P_r \sum_{i=1}^{N_r}\frac{e_i}{{n \choose r}} 
\lvert \lvert A_{\mathbf{v}_i} A_{\mathbf{v}_i}^{+} - \mathbf{1}_n \lvert\lvert_2^2
\end{align*}
where $P_r$ is defined in \cref{Theorem_FRC_EX_error_one_iteration} and  $A_{\mathbf{v}_i}^{+}$ denotes the pseudo inverse of $A_{\mathbf{v}_i}$.
%
%
\end{theorem}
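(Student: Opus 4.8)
The plan is to compute the expected error by first conditioning on the number $r$ of non-stragglers received in an iteration and then averaging the optimal decoding error over all resulting non-straggler matrices, exactly as in the FRC case. As argued in the proof of \cref{Theorem_FRC_EX_error_one_iteration}, random shuffling makes every one of the $\binom{n}{r}$ choices of $r$ columns of $B_{CRC}$ equally likely in any $\ell$-th iteration. Hence the expected error conditioned on $r$ non-stragglers equals $\frac{1}{\binom{n}{r}}\sum_{A}\mbox{err}(A)$, where the sum runs over all $\binom{n}{r}$ submatrices $A$ of $B_{CRC}$ and $\mbox{err}(A)$ is as defined in \eqref{Eqn_Error_submatrix_A}. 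The purpose of introducing cycle classes is to evaluate this sum efficiently by grouping together matrices that share the same error.

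The key step, and the main obstacle, is to show that all non-straggler matrices in one cycle class have the same optimal decoding error. I would exploit the circulant structure of $B_{CRC}$: in the $(s-1)$-tolerant CRC scheme each column is a cyclic shift, by one coordinate, of the previous one. Consequently, if $\mathbf{v}^{\prime} \in \mathcal{S}_{\mathbf{v}_i}$ is a cyclic shift of $\mathbf{v}_i$, then the selected columns shift accordingly and the resulting submatrix satisfies $A_{\mathbf{v}^{\prime}} = \Pi\, A_{\mathbf{v}_i}$ for a suitable cyclic row-permutation matrix $\Pi$. Since $\Pi$ is orthogonal and fixes the all-ones vector, $\Pi^{\top}\mathbf{1}_n = \mathbf{1}_n$, we obtain
\begin{align*}
\mbox{err}(A_{\mathbf{v}^{\prime}}) &= \min_{\mathbf{x}} \lVert \Pi A_{\mathbf{v}_i}\mathbf{x} - \mathbf{1}_n \rVert_2^2 \\
&= \min_{\mathbf{x}} \lVert A_{\mathbf{v}_i}\mathbf{x} - \mathbf{1}_n \rVert_2^2 = \mbox{err}(A_{\mathbf{v}_i}),
\end{align*}
so $\mbox{err}(\cdot)$ is constant on each cycle class $\mathcal{A}_{\mathcal{S}_{\mathbf{v}_i}}$. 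Verifying this permutation identity carefully from the exact sparsity pattern of $B_{CRC}$ is where the real work lies.

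With this invariance established, I would partition the $\binom{n}{r}$ weight-$r$ supports into the $N_r$ cycle classes of orders $e_1, \dots, e_{N_r}$, so that $\sum_{i=1}^{N_r} e_i = \binom{n}{r}$ in accordance with \cref{Lemma_Tavares}, and replace the sum over individual matrices by a weighted sum over representatives:
\begin{align*}
\frac{1}{\binom{n}{r}}\sum_{A} \mbox{err}(A) = \sum_{i=1}^{N_r} \frac{e_i}{\binom{n}{r}}\, \mbox{err}(A_{\mathbf{v}_i}).
\end{align*}
Finally, I would use the least-squares characterization $\mbox{err}(A_{\mathbf{v}_i}) = \lVert A_{\mathbf{v}_i} A_{\mathbf{v}_i}^{+}\mathbf{1}_n - \mathbf{1}_n \rVert_2^2$, since $A_{\mathbf{v}_i}^{+}\mathbf{1}_n$ is the minimizer of \eqref{Eqn_Error_submatrix_A}, and then take the expectation over $r$ using the probabilities $P_r$ inherited from \cref{Theorem_FRC_EX_error_one_iteration}. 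Because an $(s-1)$-tolerant scheme decodes exactly whenever $r \geq n-s+1$, the error vanishes for those terms and the outer sum runs only up to $r = n-s$, which yields the claimed expression for $\mathbb{E}_{err}(CRC, n, m, s)$.
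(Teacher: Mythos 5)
Your proposal follows essentially the same route as the paper's proof: condition on $r$, use the equal likelihood of all $\binom{n}{r}$ submatrices under shuffling, show the decoding error is constant on each cycle class via the circulant structure of $B_{CRC}$, and collapse the sum to weighted cycle representatives. Your formalization of the invariance step via an orthogonal row-permutation matrix $\Pi$ with $\Pi^{\top}\mathbf{1}_n=\mathbf{1}_n$ is a slightly cleaner rendering of the paper's ``the two systems of equations are cyclic shifts of each other'' argument, but it is the same idea.
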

\begin{proof}
As explained in \cref{Theorem_FRC_EX_error_one_iteration}, as a consequence of random shuffling, all possible ${n \choose r}$ submatrices $A$ of $B_{CRC}$ corresponding to $r$ non-stragglers are equally likely.
However, in general, the value of the decoding error $\mbox{err}(A)$ corresponding to each submatrix $A$ (obtained using \cref{Eqn_Error_submatrix_A}) could be different.
For the CRC scheme we next show that, the decoding error of any two submatrices that lie in the same cycle class is the same. 
The required expression for $\mathbb{E}_{err}(CRC, n, m, s)$ then follows from \cref{Eqn_Error_submatrix_A}, by conditioning over all possible ${n \choose r}$ submatrices and $r$.

Suppose two submatrices $A_1$ and $A_2$ belong to the same cycle class, say $\mathcal{A}_{\mathcal{S}_{\mathbf{v}_i}}$ for some $i \in \{1, 2, \ldots, N_r\}$.
For the CRC scheme it is known that, the $(j+1)$-th column of $B_{CRC}$ is obtained by one cyclic shift its $j$-th column for $j = 2,3,\ldots, n$ and the first column is obtained by one cyclic shift the $n$-th column~\cite{Exact_GD_First_2017}. 
Using this property and by the definition of cycle class,  
every column of $A_2$ is obtained by $m$ cyclic shifts of $A_1$ for some $m, 1 \leq m \leq n-1$. Thus the system of equations corresponding to $A_2 \mathbf{x} = \mathbf{1}_n$ is obtained by cycling shifting the system of equations $A_1 \mathbf{x} = \mathbf{1}_n$, since the right hand side of both the system of equations is equal to $\mathbf{1}_n$.
Since both the systems of equations are the same, they will have the same decoding error that can be obtained as $err(A_{\mathbf{v}_i}) = \lvert \lvert A_{\mathbf{v}_i} A_{\mathbf{v}_i}^{+} - \mathbf{1}_n \lvert\lvert_2^2$~\cite{Gilbert_Strang}.
\end{proof}

\begin{remark}
\label{Remark_total_expected_err}
Probability that $m$ workers out of $n$ belong to the slow-class is ${n \choose m} \hat{p}^m (1-\hat{p})^{n-m}) \eqqcolon P_m$ and
by conditioning over $m$, we can find expected error $\mathbb{E}_{err}(CRC, n, s)$ for heterogeneous straggler model as $\mathbb{E}_{err}(CRC, n, s) = \mathbb{E}_{err}(CRC, n, m, s) P_m$. Similarly for FRC scheme.
\hfill $\square$
\end{remark}

%
\section{Numerical results}
\label{Section_simulations}

In this section, we provide numerical results from a simulation environment  and from experiments on Amazon EC2 clusters for our proposed shuffling method for FRC and CRC schemes. 
To simulate the heterogeneous straggler model, 
we first classify workers into slow and active-classes 
and workers do not change their classes throughout the experiment of $L$ iterations. 
In any $\ell$-th iteration, slow and active class workers straggle with their assigned probabilities and we obtain a particular realization for the set of non-stragglers.
Using the computations performed by the given set of non-stragglers, we
train the data and plot the absolute loss versus iterations.


For the simulations, we have chosen data-sets, \textit{Boston} (linear regression), \textit{Diabetes} (logistic regression - binary classification), and \textit{Iris} (logistic regression, multi-class classification), provided in python's Scikit-Learn library.
For \textit{Boston} and \textit{Diabetes} data-sets, the plot of absolute loss versus iterations is provided in \cref{Figure_Loss_curves_ISIT}-(a) and (b) respectively, where we have chosen $n=8, s=2, \hat{p} = 0.3, p_{ss} = 0.8$ and $p_{as} = 0.01$.
It can be seen that the loss for CRC-shuffling is less than FRC-shuffling. This is because, for the chosen parameters we have $\mathbb{E}_{err}(CRC, n, s) = 0.5173$ and $\mathbb{E}_{err}(FRC, n, s) = 1.976$ 
(Theorems~\ref{Theorem_FRC_EX_error_one_iteration}, \ref{Theorem_CRC_EX_error_one_iteration}, and \cref{Remark_total_expected_err}).

For AWS experiments, we have chosen $t2.micro$ instances for 
workers and $m5.large$ for the master. 
Note that our heterogeneous straggler model is equivalent to master being waiting for a fixed-time duration of $t$ seconds for workers to finish their work. The workers who are unable to finish their computations before $t$ seconds are declared as stragglers. For our experiment, we choose this time $t$ in accordance with the chosen data-set. 
In \cref{Figure_Loss_curves_ISIT}-(c) and (d), the plot of absolute loss for training \textit{iris} data-set is illustrated for FRC and CRC respectively, where $n = 20, s= 2$. 
%

\section{Conclusion}
\label{Section_Conclusion}


In this paper, we propose the heterogeneous straggler model for DML to accurately model the correlated straggler behavior across iteration. The motivation for this model stems from our observations made on an Amazon AWS cluster.
%
%
Under this straggler model, we observe that the performance of existing DML algorithms is impaired, primarily because of non-uniform access of the training data. 
To improve the performance of DML algorithms, we propose data shuffling as a viable remedy. 
Shuffling strategy provides a simple, yet effective solution and is suitable for shared memory architectures (used in AWS) without causing any extra overhead.
Our experiments on the Amazon AWS cluster indeed attest the benefits of shuffling data when using approximate gradient coding schemes like FRC and CRC. 
We support these observations with theoretical expressions for the expected decoding error for FRC and CRC with shuffling, which could be of independent interest.

While the shuffling idea is elegant and easy to implement, as part of future work, it would be interesting to investigate existence of optimal data distribution strategies similar to those proposed in \cite{Timely_coded_Avestimehr_ISIT_2019, Timely_coded_Avestimehr_2019, Heterigenity_aware_2019} and compare its performance with our  shuffling scheme.
As part of future work, we also intend to analyze the performance of other approximate gradient coding schemes such as stochatic gradient coding~\cite{Approx_GD_SGD_Bitar_2020} and bernoulli gradient coding~\cite{Apprx_GD_sparse_graph_Papailiopoulos_2017_arxiv} with data shuffling. 
%


%
\section*{Acknowledgments}
This work is supported by the DST-INSPIRE faculty program of Government of India.


\bibliographystyle{IEEEtran}
\bibliography{References_Gradient_Coding}


\end{document}